\theoremstyle{thmstyleone}%
\newtheorem{theorem}{Theorem}
\theoremstyle{thmstyletwo}%
\newtheorem{example}{Example}%
\newtheorem{remark}{Remark}%
\newtheorem{corollary}{Corollary}
\newtheorem{construction}{Construction}
\theoremstyle{thmstylethree}%
\newtheorem{definition}{Definition}%
\begin{document}

\title[Multiple Spectrally Null Constrained Complete Complementary Codes of Various Lengths Over Small Alphabet]{Multiple Spectrally Null Constrained Complete Complementary Codes of Various Lengths Over Small Alphabet}


\author*[1]{\fnm{Rajen} \sur{Kumar} \href{https://orcid.org/0000-0003-2100-6447}{\includegraphics[scale=0.01]{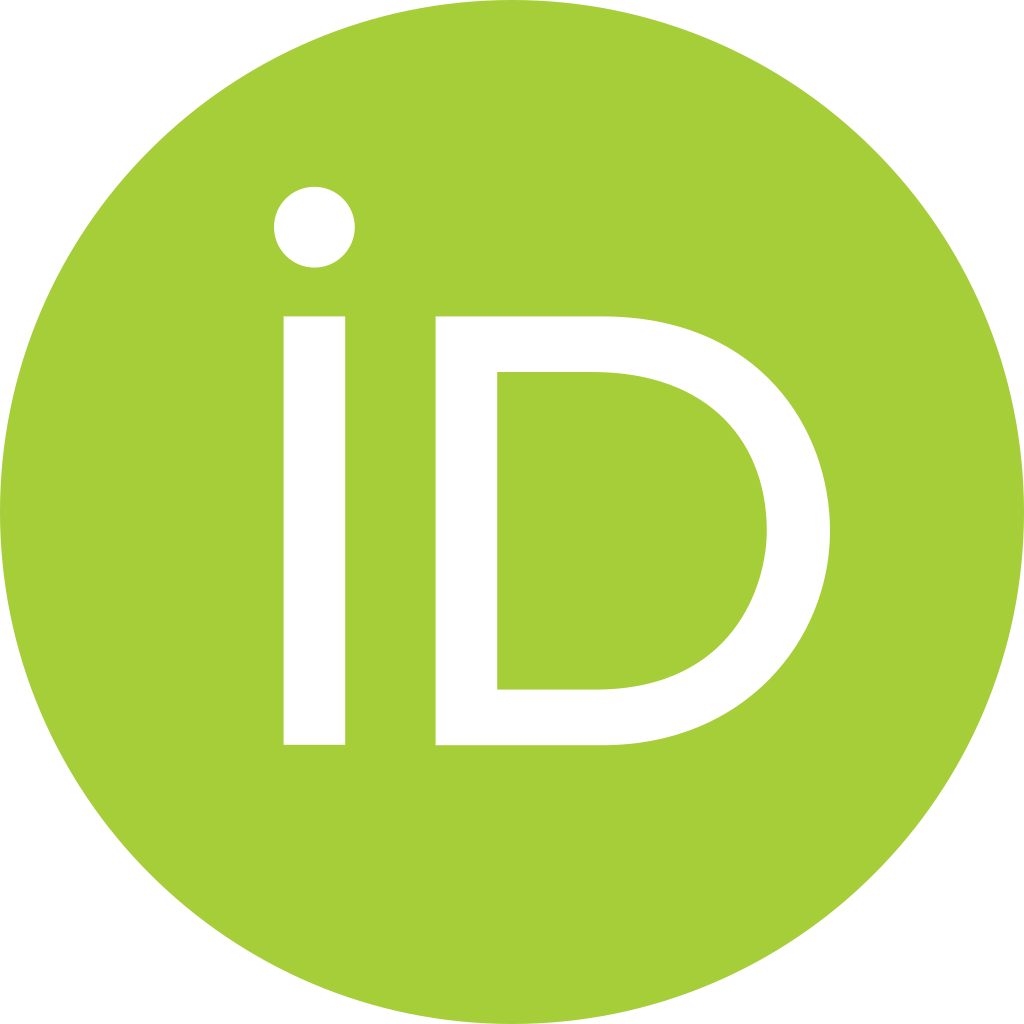}}}\email{rajen\_2021ma4@iitp.ac.in}

\author[2]{\fnm{Palash} \sur{Sarkar} \href{https://orcid.org}{\includegraphics[scale=0.01]{orcid_logo.png}}}\email{palash.sarkar@uib.no}

\author[1]{\fnm{Prashant Kumar} \sur{Srivastava} \href{https://orcid.org/0000-0002-7651-5639}{\includegraphics[scale=0.01]{orcid_logo.png}}}\email{pksri@iitp.ac.in}


\author[3]{\fnm{Sudhan} \sur{Majhi} \href{https://orcid.org/0000-0002-2142-1862}{\includegraphics[scale=0.01]{orcid_logo.png}}}\email{smajhi@iisc.ac.in}

\affil[1]{\orgdiv{Department of Mathematics}, \orgname{Indian Institute of Technology}, \orgaddress{\city{Patna}, \postcode{801106}, \state{Bihar}, \country{India}}}

\affil[2]{\orgdiv{Department of Informatics}, \orgname{University of Bergen}, \orgaddress{\city{Selmer Center}, \country{Norway}}}

\affil[3]{\orgdiv{Department of Electrical Communication Engineering}, \orgname{Indian Institute of Science}, \orgaddress{ \city{Bangalore}, \postcode{560012}, \state{Karnataka}, \country{India}}}


\abstract{Complete complementary codes (CCCs) are highly valuable in the fields of information security, radar and communication. The spectrally null constrained (SNC) problem arises in radar and modern communication systems due to the reservation or prohibition of specific spectrums from transmission. The literature on SNC-CCCs is somewhat limited in comparison to the literature on traditional CCCs. The main objective of this paper is to discover several configurations of SNC-CCCs that possess more flexibility in their parameters. The proposed construction utilises the existing CCCs and mutually orthogonal sequences and covers all lengths with the smallest alphabets $\{-1,0,1\}$. Further, SNC-CCC is extended to multiple SNC-CCCs with an inter-set zero cross-correlation zone (ZCCZ). We could control the cross-correlation magnitude outside the ZCCZ, through the proposed construction. Consequently, the resulting codes possess both aperiodic and periodic inter-set ZCCZ and feature a low magnitude of cross-correlation value outside the ZCCZ.}

\keywords{spectrally null constrained (SNC), complete complementary code (CCC), zero cross-correlation zone (ZCCZ), correlation.}



\maketitle

A Golay complementary pair (GCP) indicates a pair of sequences whose sum of aperiodic auto-correlation functions (AACFs) results in zero at non-zero time shifts. Golay uncovered a sequence pair that can be used during the research of multislit spectroscopy \cite{Golay:51, Golay_1961}. GCPs are comprehensively utilised in engineering applications, particularly in radar systems and communication systems. These applications include channel estimation \cite{Spasojevic, Wang_IET}, design of the physical uplink control channel \cite{Sahin}, non-orthogonal multiple access \cite{Yu}, radar waveform design \cite{Pezeshki}, control of peak-to-average power ratio for multi-carrier communication systems \cite{Davis}, and more. Tseng and Liu presented the idea of a Golay complementary set (GCS) comprising more than two sequences. The sum of the AACFs for all sequences is zero except at zero time shift \cite{Tseng}. Due to their similar characteristics to GCPs, GCSs are also utilised in several communication and radar systems \cite{Nguyen, Aparicio}. Moreover, GCS has the added benefit of a greater code rate compared to GCP, in addition to its flexible length advantage \cite{Paterson, Schmidt, Sarkar_2020}. 

A mutually orthogonal Golay complementary set (MOGCS) is a collection of $K$ GCSs. Each GCS in the MOGCS has $M$ sequences, each of length $L$. Additionally, the cross-correlation function between distinct GCSs is zero. A MOGCS is referred to as a complete complementary code (CCC) when $K$ is equal to $M$ \cite{Suehiro}. For implementing multi-antenna or multi-user systems, it is important to consider the cross-correlation characteristics across sets of sequences. This is particularly relevant for systems such as CCC based code division multiple access (CDMA) and multi-input multi-output (MIMO) radar \cite{Liu_2021,Sun_2015_Survey,Tang_2014,Sun_2015_VT}. The idea of CCC extended to multiple CCC with an inter-set zero cross-correlation zone (ZCCZ) \cite{Men_MCCC, Tian_MCCC}, which is similar to the $Z$-complementary code set (ZCCS). The idea of CCC is also extended to multiple CCC with inter-set low cross-correlation, which is similar to the Quasi complementary code set (QCCS).

In systems that use orthogonal frequency division multiplexing (OFDM), some sub-carriers are designated as reserved and are not allowed to transmit signals \cite{IEEE_std}. For instance, the direct current sub-carrier is specifically allocated, known as spectrally null constrained (SNC), to prevent any discrepancies in the D/A and A/D converters during radio frequency transmission \cite{Zhou_2020}. The increasing need for OFDM or multi-carrier CDMA sequences with spectrum null constraints, also known as non-contiguous sequences, is primarily motivated by their potential applications in cognitive radio (CR) communications \cite{Hamilton}. Transmission on sub-carriers not used by primary users constrains secondary users in OFDM based CR transmissions. The Third-Generation Partnership Project Long-Term Evolution enhanced licenced-assisted access and the New Radio in Unlicenced (NR-U) implemented interlaced transmission, with the null locations of the SNC sequences being regularly distributed (although the nulls in NR-U are unevenly spaced). It is also important to think about the SNC when using the CCC as omnidirectional precoding for a rectangular array that is not all the same size \cite{Su_2019}. In the IEEE P802.15.4z standard, the average power permitted in ultra-wide-band is very low. Therefore, the sequence design will consider the inclusion of null to decrease the average power. To summarise, several situations in sequence design require the use of null constraints.

Only a few of the conventional GCSs and CCCs take into account this limitation, which has been addressed in \cite{Gavish_ternary, Sahin, Zhou_2020, Sahin_2021, Ipanov, Shen_CCC, Nishant_SNC}. Sahin and Yang extended the conventional GCPs to address the SNC problem, as described in \cite{Sahin} and \cite{Sahin_2021}. In \cite{Zhou_2020}, Zhou \textit{et al.} sequentially built the SNC-MOGCSs/SNC-GCSs using an iterative approach. They used two sequences extracted from a GCP as the initial seed sequence and then introduced a certain number of zeros into these two sequences. As a result, new sequences were obtained with a zero correlation zone. Hence, a challenging issue arises regarding the methodology for constructing SNC-CCC. Shen \textit{et~al.} proposed a method for constructing SNC-CCC using extended Boolean functions and graphs \cite{Shen_CCC}. However, the parameters are only in the power of $p$ when elements of code are considered from the $q$th root of unity and zero, for $p\mid q$ and $p\ge 2$. In machine-type communication, alphabet size plays a major role and must be minimum \cite{Sarkar_QCCS_TIT}. Notably, there are gaps in the SNC-CCC proposed in \cite{Shen_CCC} in terms of lengths and alphabet sizes. For example, when the alphabets are $-1,1,0$, the set size, the code size and the length are restricted to the form of a power-of-two. We are strongly motivated to include a bigger range of parameters for SNC-CCC in comparison to existing literature. The proposed construction not only provides SNC-CCCs with new parameters but also provides flexibility in the alphabet and the length of the constituent sequences. In this paper, codes are referred to as CCC, when it is a traditional CCC, and codes with nulls are referred to as SNC-CCC.

In the proposed construction, we use existing CCCs and mutually orthogonal sequences (MOSs) as seeds. By performing the concatenation operation in a specific way, as described in Section \ref{Sec:Propos}, we obtain multiple SNC-CCCs over a small alphabet. It may be noted that our smallest alphabet is $\{-1,0,1\}$, on which the proposed construction is capable of generating almost all possible lengths. The proposed multiple SNC-CCCs also have a ZCCZ property with respect to both periodic and aperiodic correlation. With these properties, the obtained code set is useful for multi-cell MC-CDMA systems, where the users inside a cell enjoy interference-free communication due to the ideal correlation property of a SNC-CCC and the users from two different cells also enjoy interference-free communication within the ZCCZ. Our study also reveals that we can control non-zero inter-set cross-correlation magnitude values outside the ZCCZ. We consider this an opportunity to minimize the upper bound for inter-set cross-correlation magnitude values of the proposed multiple SNC-CCCs.

We structure the subsequent sections of the paper as follows: Section \ref{Sec:prelim} establishes appropriate notations and definitions. Section \ref{Sec:Propos} introduces new constructions for SNC-CCC and multiple SNC-CCCs and provides an example to illustrate this. Further, we explain the ZCCZ width of the multiple SNC-CCCs and conclude with the low inter-set cross-correlation value. In Section \ref{Sec:comparison}, a comparison has been given with existing literature. Based on the proposed work, we have highlighted three problems that we may consider as our future work in Section \ref{Sec:fut_work}. The paper is concluded in Section \ref{Sec:Con}.
\section{Preliminaries}\label{Sec:prelim}
First, we specify the notation and definitions which are used consistently throughout this paper.

\begin{definition}
Let $\mathbf{a}=(a_1,a_2,\ldots,a_{L})$ and $\mathbf{b}=(b_1,b_2,\ldots,b_{L})$ be
two complex-valued sequences of length $L$ and $\tau$ be an integer. Define
\begin{equation}
    \mathcal{C}(\mathbf{a}, \mathbf{b})(\tau)=\left\{\begin{array}{ll}
		\sum_{i=1}^{L-\tau} a_{i+\tau} b_{i}^{*}, & 0 \leq \tau<L, \\
		\sum_{i=1}^{L+\tau} a_{i} b_{i-\tau}^{*}, & -L<\tau<0, \\
		0, & \text { otherwise, }
	\end{array}\right.
\end{equation}
which is called ACCF of $\mathbf{a}$ and $\mathbf{b}$ at time shift $\tau$, where $ (\cdot)^*$ represents complex conjugation. When $\mathbf{a}=\mathbf{b}$, $\mathcal{C}(\mathbf{a},\mathbf{b})(\tau)$ is called AACF of $\mathbf{a}$ and is denoted by $\mathcal{C}(\mathbf{a})(\tau)$. Further, periodic cross-correlation function (PCCF) of $\mathbf{a}$ and $\mathbf{b}$ at time shift $\tau$ is defined as
\begin{equation}
    \Theta(\mathbf{a}, \mathbf{b})(\tau)=\mathcal{C}(\mathbf{a}, \mathbf{b})(\tau)+\mathcal{C}(\mathbf{a}, \mathbf{b})(\tau-L).
\end{equation}
\end{definition}

\begin{definition}
	Let $\mathbf{C}=\left\{{C}_{k}: 1\le k \le K \right\}$ be a set of $K$ matrices (codes), each having order $M \times L$. And $C_{k}$ is defined as
\begin{equation}
    C_{k}=\left[\begin{array}{c}
\mathbf{c}_{1}^{k}\\\mathbf{c}_{2}^{k}\\ \vdots\\\mathbf{c}_{M}^{k}
	\end{array}\right]_{M \times L},
\end{equation}
	where $\mathbf{c}_{j}^{k}(1\leq j \leq M ,1 \leq k \leq K)$ is the $j$-th row sequence of $C_{k}$.
 Then ACCF between $C_{k_{1}}$ and $C_{k_{2}}$ is defined by
	\begin{equation}
		\mathcal{C}\left({C}_{k_{1}},{C}_{k_{2}}\right)(\tau)=\sum_{\nu=1}^{M} \mathcal{C}\left(\mathbf{c}_{\nu}^{k_{1}}, \mathbf{c}_{\nu}^{k_{2}}\right)(\tau).
	\end{equation}
When ${C}_{k_1}={C}_{k_2}$, $\mathcal{C}(C_{k_1},C_{k_2})(\tau)$ is called AACF of ${C}_{k_1}$ and is denoted by $\mathcal{C}(C_{k_1})(\tau)$. Similarly, the PCCF of between $C_{k_{1}}$ and $C_{k_{2}}$ is defined by
\begin{equation}
		\Theta\left({C}_{k_{1}},{C}_{k_{2}}\right)(\tau)=\sum_{\nu=1}^{M} \Theta\left(\mathbf{c}_{\nu}^{k_{1}}, \mathbf{c}_{\nu}^{k_{2}}\right)(\tau).
	\end{equation}
\end{definition}
\begin{definition}
    Let $\mathbf{a}=(a_1,a_2,\ldots,a_L)$ be any complex-valued sequence and $N=\{x: a_x = 0, \text{ for }1 \le x \le L\}$ is non-empty set, $\mathbf{a}$ is called a SNC sequence. A CCC is called an SNC-CCC if there is at least one SNC sequence in the CCC \cite{Shen_CCC}.
\end{definition}
\begin{definition}
    Let $\mathbf{C}=\left\{{C}_{k}: 1\le k \le M \right\}$ be a set of $M$ codes of order $M \times L$ and it follows
    \begin{equation}
        \mathcal{C}(C_{k_1},C_{k_2})(\tau)=\left\{ \begin{array}{cc}
           ML-\epsilon,  & k_1=k_2, \tau=0 \\
           0 , & \text{otherwise,}
        \end{array}\right.
    \end{equation}
    where, $\epsilon$ is the number of zeros in a code. When $\epsilon=0$, it is referred to as traditional aperiodic CCC. Note that aperiodic CCC also satisfies the ideal periodic correlation properties.
     
     To avoid possible confusion between the terms aperiodic and periodic CCC, we will exclusively use the term CCC throughout this paper.
\end{definition}
\begin{definition}
    Let $\mathfrak{C}=\{\mathbf{C}^j:1\le j\le P\}$ be a collection of $P$ many $(M,L)$-CCCs, i.e., $\mathbf{C}^j=\left\{{C}^j_{k}: 1\le k \le M \right\}$, where $1\le j \le P, P\ge 2$. If any two codes from different CCCs $\mathbf{C}^{j_1}$ and $\mathbf{C}^{j_2}$ with $1\le j_1\ne j_2 \le P$ follows
    \begin{equation}\label{Eq:cross_interset}
    \begin{aligned}
        \mathcal{C}({C}^{j_1}_{k_1},{C}^{j_2}_{k_2})(\tau)&=0, ~\lvert \tau \rvert <Z_A,\\
        \delta_{A}&=\max \left\{\lvert \mathcal{C}(C^{j_1}_{k_1},C^{j_2}_{k_2})(\tau) \rvert:j_1\ne j_2, 1\le k_1,  k_2 \le M, Z \le \lvert \tau \rvert \le L-1 \right\},
    \end{aligned}  
    \end{equation}
    where $1\le k_1,k_2\le M$, denote $\mathfrak{C}$ is aperiodic $(P,M,L,Z_A,\delta_A)$-CCCs. \\ Similarly, 
      \begin{equation}\label{Eq:per_cross_interset}
    \begin{aligned}
        \Theta({C}^{j_1}_{k_1},{C}^{j_2}_{k_2})(\tau)&=0, ~\lvert \tau \rvert <Z_P,\\
        \delta_{P}&=\max \left\{\lvert \Theta(C^{j_1}_{k_1},C^{j_1}_{k_2})(\tau) \rvert:j_1\ne j_2, 1\le k_1,  k_2 \le M, Z \le \lvert \tau \rvert \le L-1 \right\},
    \end{aligned}  
    \end{equation}
    where $1\le k_1,k_2\le M$, then $\mathfrak{C}$ is  periodic $(P,M,L,Z_P,\delta_P)$-CCCs. 
\end{definition}

Let $\mathbf{a}$ and $\mathbf{b}$ be two complex sequences of identical length and be orthogonal, i.e., their dot product $\langle\mathbf{a},\mathbf{b}\rangle$ is equal to $0$. We refer to the set of sequences as MOSs when the number of sequences exceeds two and the dot product of any two sequences is zero. A construction of $P$ many MOSs with length $P$ is suggested in \cite{Rajen_DZCCS}.

With the help of CCCs, one can easily construct SNC-CCCs by inserting zeros in the following manners.
\begin{remark}\label{Lem:zero_in_CCC}
    Zero insertion properties of CCCs to construct SNC-CCCs are as follows:
\begin{enumerate}
    \item Adding some zero columns at the beginning of each code of CCC,
    \item Adding some zero columns at the last of each code of CCC,
    \item Adding the same number of zero columns between every column of each code of CCC.
\end{enumerate}
\end{remark}
The proof of the above lemma is straightforward.  Therefore, we have omitted the proof.
\section{Proposed Construction}\label{Sec:Propos}
In this section, we describe our main method of construction. First, we provide a new method, which involves the concatenation of zeros and matrices with some scalar multiplications. Scalars must be selected meticulously to ensure they do not impact the elements of matrices following multiplication.
\begin{construction}
Let $C_1,C_2,\ldots,C_P$ be a set of $M\times L$ matrices and $\mathbf{b}=(b_1,b_2,\ldots,b_P)$ be a sequence of length $P$. Define 
\begin{equation}\label{eq:consR}
    \mathcal{R}^{\mathcal{P}(n)}(C_1,C_2,\ldots,C_P;\mathbf{b})=\left[b_1 C_1 \mathbin\Vert \mathbf{0}^{n_1}  \mathbin\Vert b_2 C_2  \mathbin\Vert\mathbf{0}^{n_2}  \mathbin\Vert\cdots  \mathbin\Vert\mathbf{0}^{n_{P-1}}  \mathbin\Vert b_P C_P  \right],
\end{equation}
where, $\mathcal{P}(n)=(n_1,n_2,\ldots,n_{P-1})$ is partition of $n$ with $P-1$ non-negative integers, $n=n_1+n_2+\cdots+n_{P-1}$, $\mathbf{0}^{n_1}$ represents a zero matrix of size $M\times n_1$ and $ \mathbin\Vert$ represents concatenation of two matrices.
\end{construction}
First we consider a $(M,L)$-CCC, $P\le M$, such that $P\mid M$ and MOSs of length $P$. Now, for any positive integers $n$, we take a partition with $P-1$ non-negative integers. The partition of $n$ decides the position and numbers of nulls in the proposed codes.
\begin{theorem}\label{Th:SNC_CCC}
    Let $\mathbf{C}$ be a $(M,L)$-CCC, $P\mid M$, $\mathbf{b}^1,\mathbf{b}^2,\ldots,\mathbf{b}^P$ be MOSs of length $P$.  Now, define
    \begin{equation}
        B_{\nu P+\mu}=\mathcal{R}^{\mathcal{P}(n)}(C_{\nu P+1},C_{\nu P+2},\ldots,C_{(\nu+1)P};\mathbf{b}^\mu),
    \end{equation}
    for $0 \le \nu <\frac{M}{P},~ 1\le \mu \le P$. Then $\mathbf{B}=\{B_1,B_2,\ldots,B_{M}\}$ is a $(M,PL+n)$ SNC-CCC.
\end{theorem}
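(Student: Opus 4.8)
The plan is to compute the code ACCF $\mathcal{C}(B_{k_1},B_{k_2})(\tau)$ directly and show it collapses to $MPL$ when $k_1=k_2,\ \tau=0$ and to $0$ in every other case. Since each $B_k$ has length $PL+n$ and each of its $M$ rows carries exactly $n$ zeros, the number of zeros is $\epsilon=Mn\ge 1$, so the peak $MPL=M(PL+n)-Mn$ matches the required $M(PL+n)-\epsilon$ and simultaneously certifies the SNC property. First I would write $k_1=\nu_1 P+\mu_1$ and $k_2=\nu_2 P+\mu_2$ with $0\le\nu_i<M/P$ and $1\le\mu_i\le P$, so that $B_{k_1}$ is built from the block $C_{\nu_1 P+1},\ldots,C_{(\nu_1+1)P}$ scaled by $\mathbf{b}^{\mu_1}$, and similarly for $B_{k_2}$.

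The key step is a block decomposition of the row-wise correlation. Because both $B_{k_1}$ and $B_{k_2}$ use the \emph{same} partition $\mathcal{P}(n)$, the $\alpha$-th matrix block $b_\alpha^{\mu_i}C_{\nu_i P+\alpha}$ occupies an identical set of columns in both codes, separated by identical zero blocks. Expanding $\mathcal{C}(B_{k_1},B_{k_2})(\tau)=\sum_{\rho}\mathcal{C}(\mathbf{r}_1^\rho,\mathbf{r}_2^\rho)(\tau)$ over the rows $\rho$ and distributing the correlation sum across the block structure, the zero columns contribute nothing, so the total becomes a double sum over block indices $(\alpha,\beta)$ of terms $b_\alpha^{\mu_1}(b_\beta^{\mu_2})^{*}\,\mathcal{C}(C_{\nu_1 P+\alpha},C_{\nu_2 P+\beta})(\tau_{\alpha\beta})$, where $\tau_{\alpha\beta}$ is the relative shift induced by the two block positions. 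I would record in particular that for the diagonal pairs $\alpha=\beta$ the two blocks begin at the same column, so $\tau_{\alpha\alpha}=\tau$.

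Now the CCC hypothesis does the heavy lifting. Since $1\le\alpha,\beta\le P$, one has $\nu_1 P+\alpha=\nu_2 P+\beta$ if and only if $\nu_1=\nu_2$ and $\alpha=\beta$ (a nonzero difference $\beta-\alpha$ with $|\beta-\alpha|<P$ can never be a multiple of $P$); hence every off-diagonal pair $\alpha\ne\beta$ couples two \emph{distinct} codes of $\mathbf{C}$, whose cross-correlation vanishes at all shifts, killing those terms regardless of $\tau_{\alpha\beta}$. What survives is $\sum_{\alpha=1}^{P}b_\alpha^{\mu_1}(b_\alpha^{\mu_2})^{*}\,\mathcal{C}(C_{\nu_1 P+\alpha},C_{\nu_2 P+\alpha})(\tau)$. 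If $\nu_1\ne\nu_2$, each summand again couples distinct codes and vanishes for every $\tau$. If $\nu_1=\nu_2$, each summand is the AACF $\mathcal{C}(C_{\nu_1 P+\alpha})(\tau)$, equal to $0$ for $\tau\ne 0$ and to $ML$ for $\tau=0$; thus at $\tau=0$ the expression reduces to $ML\sum_{\alpha}b_\alpha^{\mu_1}(b_\alpha^{\mu_2})^{*}=ML\,(\mathbf{b}^{\mu_1}\cdot(\mathbf{b}^{\mu_2})^{*})$.

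The proof then closes with the MOS property: the inner product $\mathbf{b}^{\mu_1}\cdot(\mathbf{b}^{\mu_2})^{*}$ equals $0$ for $\mu_1\ne\mu_2$ and $P$ for $\mu_1=\mu_2$. Assembling the cases, $\mathcal{C}(B_{k_1},B_{k_2})(\tau)$ is nonzero only when $\nu_1=\nu_2$, $\mu_1=\mu_2$ (that is, $k_1=k_2$) and $\tau=0$, where it equals $MPL$, exactly as required for a $(M,PL+n)$ SNC-CCC. I expect the only delicate point to be the bookkeeping in the block decomposition — verifying that distributing the aperiodic correlation over the interleaved zero/matrix blocks produces precisely the scalars $b_\alpha^{\mu_1}(b_\beta^{\mu_2})^{*}$ and the diagonal relative shift $\tau_{\alpha\alpha}=\tau$ — but once the block positions are written out explicitly this is routine. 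It is worth emphasising that the argument needs \emph{no} lower bound on the padding lengths $n_i$: it is orthogonality, not block separation, that annihilates every off-diagonal and cross-set term.
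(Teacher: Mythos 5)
Your proof is correct and takes essentially the same route as the paper's: decompose the correlation of the concatenated codes into block-pair contributions, observe that every off-diagonal pair and every pair with $\nu_1\ne\nu_2$ couples distinct codes of the CCC (so vanishes at all shifts), and invoke MOS orthogonality for the surviving diagonal terms at $\tau=0$. Your version is in fact more explicit than the paper's (which only asserts the ``linear sum'' decomposition without writing the double sum or the relative shifts), and you correctly surface the implicit assumption that the MOS entries are unimodular so that the peak equals $MPL=M(PL+n)-\epsilon$.
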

\begin{proof}
    Let $k_1$ and $k_2$ be two integers such that $k_1=\nu_1P+\mu_1$ and $k_2=\nu_2P+\mu_2$.
    Without loss of generality, we consider a random $C_{\nu_1P+i_1}$. For a corresponding $\tau$ in $\mathcal{C}(B_{k_1},B_{k_2})(\tau)$, $C_{\nu_1P+i_1}$ is correlated with matrix $X_{j_1}^{t_1,t_2}$, which is made of almost three parts, first $t_1$ columns of $X_{j_1}^{t_1,t_2}$ are $t_1$ last columns of $C_{\nu_2P+j_1}$, $n_{j_1}$ columns of $X_{j_1}^{t_1,t_2}$ are zero columns, and $t_2$ last columns of $X_{j_1}^{t_1,t_2}$ are $t_2$ first columns of $C_{\nu_2P+j_1+1}$, where $t_1,t_2\ge 0$ and $t_1+t_2+j_1=L$, in Fig. \ref{fig:relation_X_C} the $X_{j_1}^{t_1,t_2}$ is depicted.
    \begin{figure}[ht!]
        \centering
        \includegraphics[width=0.85\linewidth]{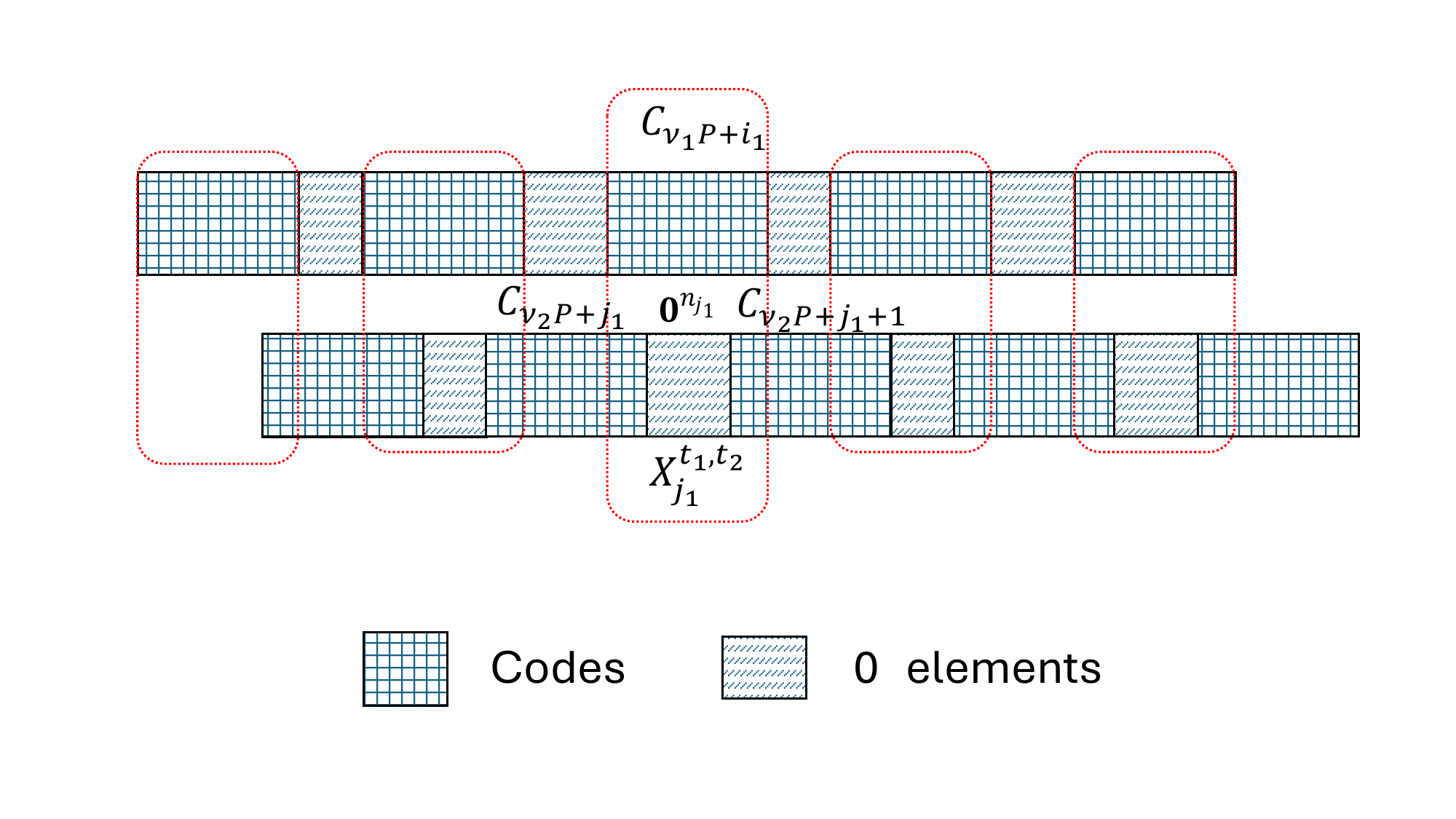}
        \caption{Illustration of $X_{j_1}^{t_1,t_2}$ and its correlation with $C_{\nu_1P+i_1}$}
        \label{fig:relation_X_C}
    \end{figure}
    
    The term $\mathcal{C}(C_{\nu_1P+i_1},X_{j_1}^{t_1,t_2})(0)$ has contribution  in $\mathcal{C}(B_{k_1},B_{k_2})(\tau)$, which is also depicted in Fig. \ref{fig:relation_X_C}. If we prove that $\mathcal{C}(C_{\nu_1P+i_1},X_{j_1}^{t_1,t_2})(0)=0$ for all $j_1,t_1,t_2$ then $\mathcal{C}(B_{k_1},B_{k_2})(\tau)=0$. For $\tau\ne 0$, we have
    \begin{equation}\label{eq_corr_c_x}
        \mathcal{C}(C_{\nu_1P+i_1},X_{j_1}^{t_1,t_2})(0)=\mathcal{C}(C_{\nu_1P+i_1},C_{\nu_2P+j_1})(j_1+t_3)+\mathcal{C}(C_{\nu_1P+i_1},C_{\nu_2P+j_2})(-(t_1+j_1)).
    \end{equation}
    Since $C_{\nu_1P+i_1}, C_{\nu_2P+j_1}, C_{\nu_2P+j_2} $ are GCSs of a CCC, it implies the right-hand side of eq. \eqref{eq_corr_c_x} is zero. Therefore, $\mathcal{C}(B_{k_1},B_{k_2})(\tau)=0$, for $\tau\ne 0$, and similarly $\mathcal{C}(B_{k_1})(\tau)=0$, for $\tau\ne 0$. Now, we check for $\mathcal{C}(B_{k_1},B_{k_2})(0)$ and $\mathcal{C}(B_{k_1})(0)$. Let $k_1\ne k_2$, we have
    \begin{equation}\label{Eq_corr_Bk1_Bk2}
        \mathcal{C}(B_{k_1},B_{k_2})(0)=\sum_{i=1}^{P}b^{\mu_1}_ib^{\mu_2*}_i \mathcal{C}(C_{\nu_1P+i},C_{\nu_2P+i})(0).
    \end{equation}
    In \eqref{Eq_corr_Bk1_Bk2}, either $\mu_1\ne \mu_2$ or $\nu_1\ne \nu_2$, which implies $\mathcal{C}(B_{k_1},B_{k_2})(0)=0$. When $k_1=k_2$, we have 
    \begin{equation}
        \mathcal{C}(B_{k_1})(0)=\sum_{i=1}^{P}b^{\mu_1}_ib^{\mu_1*}_i \mathcal{C}(C_{\nu_1P+i})(0)=PML.
    \end{equation}
    Therefore, $\mathbf{B}=\{B_1,B_2,\ldots,B_M\}$ is a $(M,PL+n)$ SNC-CCC.
\end{proof}
Since the alphabets of CCC and MOSs are identical, the resulting SNC-CCCs likewise possess the same alphabets, with an additional zero. The position of nulls can be determined by the partition of $n$, which is employed in the construction.
\begin{example}\label{Ex:SNC}
   Let 
    \begin{equation*}
        \begin{aligned}
            C_1=& \begin{bmatrix}
                + 	 + 	 +\\
                 + 	 + 	-\\
                - 	 + 	-\\
                - 	 + 	-
            \end{bmatrix} ,  C_2 =  \begin{bmatrix}
                 + 	- 	 +\\
                 + 	 + 	-\\
                - 	- 	 +\\
                 + 	 + 	 +
            \end{bmatrix},
            C_3=  \begin{bmatrix}
                 + 	- 	-\\
                 + 	 + 	 +\\
                - 	 +	 -\\
                 +	 -	 -
            \end{bmatrix} \text{and }  C_4 =
            \begin{bmatrix}
                 + 	- 	-\\
                 +	 -  +\\
                 + 	 +	  +\\
                - 	 + 	 +
            \end{bmatrix},
        \end{aligned}
    \end{equation*}
    be a $(4,3)$-CCC from \cite{Tao_SPL_ZCCS}, where $+$ and $-$ represent $1$ and $-1$, respectively. Now, assume $P=2$, $\mathbf{b}^1=(1,1)$, $\mathbf{b}^2=(1,-1)$, $n_1=3$. Then by \textbf{Theorem} \ref{Th:SNC_CCC}, $(B_1,B_2,B_3,B_4)$, such that,
\begin{equation}
    \begin{aligned}
        B_1&=   \begin{bmatrix}
             +  	 +  	 +  \,	0 \, 	0 \, 	0  	 +  	 -  	 +\\
 +  	 +  	 -  \,	0 \, 	0 \, 	0  	 +  	 +  	 -\\
 +  	 +  	 -  \,	0 \, 	0 \, 	0  	 -  	 -  	 +\\
 -  	 +  	 -  \,	0 \, 	0 \, 	0   	 +  	 +  	 +
        \end{bmatrix},
        &B_2=  \begin{bmatrix}
             +  	 +  	 + \,  0 \, 	0 \, 	0    -  	 +  	 -\\
 +  	 +	   -  \, 0 \, 	0 \, 	0     -  	 -  	 +\\
 +  	 +  	 - \, 0 \, 	0 \, 	0     +  	 +	   -\\
 -	   +	   - \, 0 \, 	0 \, 	0     -  	 -  	 -
        \end{bmatrix},\\
        B_3&=  \begin{bmatrix}
             +  	 -  	 - \,  0 \, 	0 \, 	0    +  	 -  	 -\\
 +  	 +  	 + \,  0 \, 	0 \, 	0     +	   -  	 +\\
 -  	 +	   -  \, 0 \, 	0 \, 	0    +  	 +	   +\\
 +	   -	   - \, 0 \, 	0 \, 	0    -  	 +  	 +
        \end{bmatrix},\text{ and }
        &B_4=  \begin{bmatrix}
             +  	 -	   - \, 0 \, 	0 \, 	0    -  	 +  	 +\\
 +  	 +  	 + \,  0 \, 	0 \, 	0    -  	 +	   -\\
 -	   +	   - \, 0 \, 	0 \, 	0     -  	 -  	 -\\
 +  	 -  	 - \, 0 \, 	0 \, 	0    +  	 -  	 -
        \end{bmatrix},
    \end{aligned}
\end{equation}
     is a $(4,9)$ SNC-CCC.
\end{example}
\begin{remark}
    In the \textbf{Example} \ref{Ex:SNC}, there is a freedom to decide on various values $n_1$.
\end{remark}
Now, we define a $P$ permutations $\pi_0,\pi_1,\ldots, \pi_{P-1}$ of the set $\{1,2,\ldots,M\}$, which will be used in our proposed construction. These permutations can be constructed as follows:

First consider $\pi_0$ to 
 be a random permutation of the set $\{1,2,\ldots,M\}$. Then we obtained the $i$-th permutation $\pi_i$ by cyclically shifting\footnote{The nature of the cyclic shift to be either left or right.} $\pi_1$ for $i$ times, where $1\le i \le P-1$. It can be observed that the properties of the permutations are as follows: 
\begin{equation}\label{Eq:def_pi}
    \pi_{j_1}(i_1P+\mu)\ne \pi_{j_2}(i_2P+\mu),
\end{equation}
for $j_1\ne j_2$, $0 \le i_1,i_2 < \frac{M}{P}$ and $1\le \mu \le P$.

\begin{theorem}\label{Th:SNC_MCCC}
     Let $\mathbf{C}$ be a $(M,L)$-CCC, $P\mid M$, $\mathbf{b}^1,\mathbf{b}^2,\ldots,\mathbf{b}^P$ be MOSs of length $P$ and $\pi_0,\pi_1,\ldots,\pi_{P-1}$ be permutations as defined in \eqref{Eq:def_pi}. Now, define
    \begin{equation}
        B_{jM+\nu P+\mu}=\mathcal{R}^{\mathcal{P}(n)}(C_{\pi_j(\nu P+1)},C_{\pi_j(\nu P+2)},\ldots,C_{\pi_j((\nu+1)P)};\mathbf{b}^\mu),
    \end{equation}
    for $0 \le \nu <\frac{M}{P},~ 1\le \mu \le P$ and $0\le j \le P-1$. Then, each $\mathbf{B}^j=\{B_{jM+1},B_{jM+2},\ldots,$ $B_{(j+1)M}\}$ is a $(M,PL+n)$ SNC-CCC and by combining all SNC-CCCs it becomes a multiple CCCs with inter-set ZCCZ.
\end{theorem}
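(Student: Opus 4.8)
\emph{Proof plan.} The statement bundles two claims, which I would establish in turn.

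\textbf{Each $\mathbf{B}^j$ is an SNC-CCC.} This I would obtain directly from Theorem~\ref{Th:SNC_CCC}. Put $C'_k=C_{\pi_j(k)}$; since the defining property of a CCC constrains only the pairwise auto- and cross-correlations $\mathcal{C}(C_{k_1},C_{k_2})(\tau)$, relabelling the codes by the permutation $\pi_j$ leaves $\{C'_1,\dots,C'_M\}$ a $(M,L)$-CCC. With this relabelling, $B_{jM+\nu P+\mu}=\mathcal{R}^{\mathcal{P}(n)}(C'_{\nu P+1},\dots,C'_{(\nu+1)P};\mathbf{b}^\mu)$ is exactly the code that Theorem~\ref{Th:SNC_CCC} produces from the CCC $\{C'_k\}$ and the MOSs $\mathbf{b}^1,\dots,\mathbf{b}^P$. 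Hence $\mathbf{B}^j$ is a $(M,PL+n)$ SNC-CCC for every $0\le j\le P-1$.

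\textbf{Inter-set ZCCZ.} Fix $j_1\ne j_2$ and two codes $B_{k_1}\in\mathbf{B}^{j_1}$, $B_{k_2}\in\mathbf{B}^{j_2}$, with $k_1=j_1M+\nu_1P+\mu_1$ and $k_2=j_2M+\nu_2P+\mu_2$. I would expand $\mathcal{C}(B_{k_1},B_{k_2})(\tau)=\sum_{\rho=1}^{M}\mathcal{C}(\mathbf{r}^{k_1}_\rho,\mathbf{r}^{k_2}_\rho)(\tau)$, where $\mathbf{r}^{k}_\rho$ is the $\rho$-th row of $B_k$, namely the zero-padded concatenation of the scaled CCC rows $b^\mu_s\,\mathbf{c}^{(\cdot)}_\rho$ placed at the block positions fixed by the \emph{common} partition $\mathcal{P}(n)$. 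Because both codes use the same partition, the summand $a_{i+\tau}b_i^{*}$ is supported only where a shifted block of $B_{k_1}$ meets a block of $B_{k_2}$: for $|\tau|\le L-1$ block $s$ of $B_{k_1}$ overlaps only block $s$ of $B_{k_2}$, and an adjacent block is reached only once $|\tau|$ exceeds the corresponding gap $n_s$. Summing a same-block overlap over the rows collapses the $(s,s)$ contribution to $b^{\mu_1}_s\overline{b^{\mu_2}_s}\,\mathcal{C}\!\left(C_{\pi_{j_1}(\nu_1P+s)},C_{\pi_{j_2}(\nu_2P+s)}\right)(\tau)$, a code-correlation of the underlying CCC.

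Now I would invoke the CCC property: $\mathcal{C}(C_a,C_b)(\tau)=0$ for every $\tau\ne 0$, and at $\tau=0$ it is $0$ unless $a=b$. Let $Z$ be the width determined by the gaps, small enough that no block reaches a neighbour for $|\tau|<Z$. For $0<|\tau|<Z$ only same-block overlaps occur and each carries a nonzero internal shift, so every term vanishes. At $\tau=0$ the blocks align exactly and the correlation reduces to $ML\sum_{s}b^{\mu_1}_s\overline{b^{\mu_2}_s}$ taken over those $s$ with $\pi_{j_1}(\nu_1P+s)=\pi_{j_2}(\nu_2P+s)$; condition~\eqref{Eq:def_pi} forbids any such $s$ when $j_1\ne j_2$, so this sum is empty and equals $0$. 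Thus $\mathcal{C}(B_{k_1},B_{k_2})(\tau)=0$ for all $|\tau|<Z$, which is the asserted inter-set ZCCZ, and combined with the first claim makes $\mathbf{B}=\bigcup_{j}\mathbf{B}^j$ a multiple CCC with inter-set ZCCZ.

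\textbf{Anticipated obstacle.} The delicate point is the bookkeeping of the off-diagonal (adjacent-block) overlaps: I must verify that the first exact adjacent alignment, which occurs at internal shift $0$ and could revive a nonzero term, lands outside $|\tau|<Z$, thereby pinning down $Z$ from $\mathcal{P}(n)$. Equally, one should confirm that \eqref{Eq:def_pi} is applied at every block position $s=1,\dots,P$, including the class of multiples of $P$, since if that last residue class were not made disjoint across the permutations the $\tau=0$ term could fail to cancel.
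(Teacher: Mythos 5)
Your proposal is correct and takes essentially the same route as the paper's (much terser) proof: both decompose the inter-set cross-correlation into block-pair contributions, observe that an AACF of a seed code at zero internal shift can first occur at $|\tau|=L+n_i$, and conclude the ZCCZ width $L+\lambda$ with $\lambda=\min_{2\le i\le P}n_i$. Your flagged obstacle about the last residue class is well taken: as printed, \eqref{Eq:def_pi} only requires $1\le\mu<P$, so one must additionally have the disjointness at $\mu=P$ for the $\tau=0$ same-block term to vanish — the paper's proof silently assumes this.
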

\begin{proof}
    Let $B_{k_1}$ and $B_{k_2}$ be codes from two different CCCs. Using a similar argument can be made as in the proof of \textbf{Theorem} \ref{Th:SNC_CCC}, a $C_{\nu_1P+i_1}$ will be correlated with $X^{t_1,t_2}_{j_1}$. The term $\mathcal{C}(C_{\pi_{k_1}(\nu_1P+i_1)},X_{j_1}^{t_1,t_2})(0)$ is non-zero only when either $t_2=L$, $C_{\pi_{k_1}(\nu_1P+i_1)}=C_{\pi_{k_2}(\nu_2P+j_1+1)}$ or $t_1=L$, $C_{\pi_{k_1}(\nu_1P+i_1)}=C_{\pi_{k_2}(\nu_2P+j_1)}$. This condition will occur for a minimum time shift $\tau=l+\lambda$ for $\lambda=\min \{n_i:1 \le i < P\}$ and the maximum time shift $\tau=PL+n-L$. Thus, aperiodic ZCCZ width is $L+\lambda$, and periodic ZCCZ width is $L$.
\end{proof}
\begin{remark}
    The elements in the partition $n$ can be used to obtain the ZCCZ width of multiple SNC-CCCs. $L$ will be periodic and $\lambda+L$ be the aperiodic ZCCZ, where
    \begin{equation}
        \lambda=\min_{1\le i \le P-1}  \{n_i\}.
    \end{equation}
\end{remark}
Now, we present a method to decrease the magnitude of the cross-correlation value outside the ZCCZ.
\begin{corollary}\label{Cor:SNC_QCCS_n}
    Let $\mathcal{P}(n)$ be partition of $n$ i.e., $n=n_1+n_2+\ldots+n_{P-1}$ such that $n_{i_1}\ne n_{i_2} $ for $1\le i_1,i_2\le P-1$ in \textbf{Theorem} \ref{Th:SNC_MCCC} then $\delta_A$ becomes $LM$. Therefore, we have a aperiodic $(P,M,PL+n,L+\lambda,LM)$-CCCs. Further, if $n_{i_1}\ne n_{i_2} \mod{L}$ for $1\le i_1,i_2\le P-1$ in \textbf{Theorem} \ref{Th:SNC_MCCC} then $\delta_P$ becomes $LM$. This type of $\mathcal{P}(n)$ is possible for $n\ge \frac{P(P-1)}{2}$. Therefore, we have a periodic $(P,M,PL+n,L,LM)$-CCCs.
\end{corollary}
Let $\pi_1,\pi_2,\ldots,\pi_P$ be permutations of $\{1,2,\ldots,M\}$ as define in \eqref{Eq:def_pi}. Further, assume for any two permutation $\pi_{j_1}$ and $\pi_{j_2}$, when 
    \begin{equation}\label{Eq:pi_unique}
    \begin{aligned}
         \pi_{j_1}(i_1P+\mu_1)=& \pi_{j_2}(i_2P+\mu_2), \text{as well as}\\
         \pi_{j_1}(i_1P+\mu_1+\alpha) \ne & \pi_{j_2}(i_2P+\mu_2+\alpha),
    \end{aligned}
    \end{equation}
    for $1\le \mu_1+\alpha,\mu_2+\alpha \le P$.

When $M=P$, the set of permutations defined by \eqref{Eq:pi_unique} only, excluding the condition in \eqref{Eq:def_pi}, is defined as Tuscan square in \cite{chu2006tuscan}. From \cite{chu2006tuscan}, it can be verified that for the prime values of $M$, the total number of permutations can not exceed $M-1$, and the same holds for $M=8$ and $9$. Therefore, we can generate a set of permutations following \eqref{Eq:def_pi} and \eqref{Eq:pi_unique} not when $M=P$ is a prime and $8$ and $9$. We provide all possible permutations for $M$ up to $12$ with different values of $P$ in Table  \ref{tab:exam_permuation}; here, $A,B,$ and $ C$ represent $10,11,$ and $12$, respectively.
\begin{table}[!ht]
    \centering
    \tiny
    \begin{tabular}{|c|c|c|}
    \hline
        $M$ & $P$ & set of permutations \\
    \hline    
        $4$ & $2$ & $(3,2,1,4)$,$(4,3,2,1)$\\
        \hline
       $4$ & $4$ &  $(1,2,3,4),(4,3,2,1),(3,1,4,2),(2,4,1,3)$\\
       \hline 
       $6$ & $2$ & $(1,2,3,4,5,6),(6,1,4,3,2,5)$\\
       \hline
       $6$ & $3$ & $(1,2,3,4,5,6),(2,1,4,6,3,5),(5,4,2,3,6,1)$\\
       \hline
       $6$ & $6$ & $(3,1,2,6,5,4),(1,6,4,3,2,5),(2,4,1,5,3,6),(6,3,5,1,4,2),(5,2,3,4,6,1),(4,5,6,2,1,3)$\\
       \hline
       $8$ & $2$ & $(7,2,3,6,5,4,1,8),(2,1,4,7,8,3,6,5)$\\
       \hline
       $8$ & $4$ & $(3,2,1,4,8,5,7,6),(1,3,4,5,7,8,6,2),(2,4,3,7,5,6,8,1),(6,7,2,3,4,1,5,8)$\\
       \hline
       $9$ & $3$ & $(1,2,3,4,5,6,7,8,9),(6,1,8,9,4,2,3,7,5),(5,9,1,8,3,4,2,6,7)$  \\
       \hline
       $10$ & $2$ &$(1,2,3,4,5,6,7,8,9,A)(4,7,2,9,6,1,A,3,8,5)$ \\
       \hline
       $10$ & $5$ &$(1,2,3,4,5,6,7,8,9,A),(9,1,7,5,3,4,6,2,A,8),(5,4,1,8,2,A,9,6,3,7)$, \\
       & & $(2,8,5,1,4,7,3,A,6,9),(3,5,9,2,1,8,A,4,7,6)$\\
       \hline
        &  & $(1,5,2,4,8,6,A,7,3,9),(6,1,A,5,7,2,3,4,9,8),(4,7,1,8,3,5,6,9,2,A),$\\
       $10$ & $10$ & $(2,6,3,1,4,A,9,5,8,7),(3,A,8,2,1,9,7,6,4,5),(5,4,6,7,9,1,2,8,A,3),$\\
        & & $(7,8,5,9,A,4,1,3,6,2),(A,2,9,6,5,3,8,1,7,4),(8,9,4,3,2,7,5,A,1,6),$\\
        & &$(9,3,7,A,6,8,4,2,5,1)$\\
       \hline
       $12$ & $2$ & $(1,2,3,4,5,6,7,8,9,A,B,C)(6,1,4,3,2,5,C,7,A,9,8,B)$\\
       \hline
       $12$ & $3$ & $(1,2,3,4,5,6,7,8,9,A,B,C),(2,1,4,6,3,5,8,7,A,C,9,B),(3,4,2,5,C,1,9,A,8,B,6,7)$\\
       \hline
       $12$ & $4$ & $(1,2,3,4,5,6,7,8,9,A,B,C),(4,3,2,1,8,7,6,5,C,B,A,9),$\\
              & & $(3,1,4,2,B,9,C,A,7,5,8,6),(A,C,9,B,6,8,5,7,2,4,1,3)$ \\
       \hline
       $12$ & $6$ & $(3,7,2,6,5,4,9,1,8,C,B,A),(7,C,A,9,8,B,1,6,4,3,2,5), (8,4,1,5,3,6,2,A,7,B,9,C),$\\
       & & $(6,3,B,1,4,2,C,9,5,7,A,8), (B,2,9,A,C,7,5,8,3,4,6,1), (4,5,C,2,1,3,A,B,6,8,7,9)$ \\
\hline
       & & $(1,9,C,B,6,4,8,A,5,7,2,3),(8,1,A,9,5,C,7,B,2,6,3,4),(5,6,1,7,4,9,2,8,C,3,A,B)$\\
       $12$& $12$ & $(7,8,B,1,2,A,6,9,3,5,4,C),(A,C,2,4,1,5,B,3,8,9,7,6),(2,5,8,6,C,1,3,7,A,4,B,9)$\\
       & & $(9,B,4,A,7,3,1,C,6,8,5,2),(6,7,9,8,3,B,5,1,4,2,C,A),(C,4,5,3,9,6,A,2,1,B,8,7)$\\
       & & $(B,A,3,C,8,2,9,4,7,1,6,5),(4,3,6,2,B,7,C,5,9,A,1,8),(3,2,7,5,A,8,4,6,B,C,9,1)$\\
       \hline
    \end{tabular}
    \caption{Set of permutations that satisfy \eqref{Eq:def_pi} and \eqref{Eq:pi_unique} for $M$ up to $12$.}
    \label{tab:exam_permuation}
\end{table}

Further, We are providing a method to construct a set of permutations that satisfy \eqref{Eq:pi_unique}.

\begin{algorithm}
\caption{Algorithm to construct set of permutations that follow \eqref{Eq:def_pi} and \eqref{Eq:pi_unique}}\label{Algo_Int}
  \begin{algorithmic}[1]
\Require Input $P,M$ such that $P\mid M$
\Ensure $P$ many permutations following \eqref{Eq:pi_unique}
\State Generate $P$ random permutation of $\{1,2,\ldots,M\}$  $\pi_1,\pi_2,\ldots,\pi_P$.
\State Now, we refine each $\pi$ to follow \eqref{Eq:def_pi}
\For {$k=2:P$ \& $kk=1:k-1$ \& $i_1=0:m/P-1$ \& $i_2=0:m/P-1$ \& $\mu=1:P$}
\If {$\pi_{k}(i_1P+\mu)=\pi_{kk}(i_2P+\mu)$}
\State Replace $\pi_{kk}(i_1P+\mu)$ with element not in the set $\{ \pi_k(i_2P+\mu): \text{ for } kk=1\text{ to }k-1, i_2=0\text{ to }m/P-1 \}$.
\EndIf
\EndFor
\State Now, we define that follows \eqref{Eq:pi_unique}
\For {$k=2:P$ \& $kk=1:k-1$ \& $\mu_1 = 1:(P - 1)$ \& $\mu_2 = 1:(P - 1)$ \& $\alpha = 1:(P - \max\{\mu_1, \mu_2\})$ \& $i_1 = 0:(M/P - 1)$ \& $i_2 = 0:(M/P - 1)$}
\If{$\pi_k(i_1  P + \mu_1)=\pi_{kk}((i_2  P + \mu_2))$ \& $\pi_k(i_1  P + \mu_1+\alpha)=\pi_{kk}((i_2  P + \mu_2+\alpha))$}
\State Replace $\pi_k(i_1  P + \mu_1+\alpha)$ by elements of $\{\pi_k(i_3  P + \mu_1+\alpha):i_1\ne i_3\}$ such that $\{\pi_k(i_3  P + \mu_1+\alpha):i_1\ne i_3\}$ must not be an element of set $\{\pi_{kk}((i_2  P + \mu_2+\alpha)): \text{ for } kk=1\text{ to }k-1, i_2=0\text{ to }m/P-1 \}$
\EndIf
\EndFor

\Return $\pi_1,\pi_2,\ldots,\pi_P$
\end{algorithmic}
\end{algorithm}
\begin{corollary}\label{Cor:SNC_QCCS_pi}
    \textbf{Theorem} \ref{Th:SNC_MCCC} provides multiple SNC-CCCs such that cross-correlation is non-zero at only one time shift and the value is equal to $LM$ when permutations satisfy \eqref{Eq:pi_unique}. Therefore, we have both aperiodic $(P,M,PL+n,L+\lambda,LM)$-CCCs and periodic $(P,M,PL+n,L,LM)$-CCCs.
\end{corollary}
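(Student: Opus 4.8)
The plan is to reduce the statement to a single combinatorial fact about the permutations, after first deriving a closed form for the aperiodic cross-correlation of any two codes of the family. Fix two codes $B_a=B_{j_1M+\nu_1P+\mu_1}$ and $B_c=B_{j_2M+\nu_2P+\mu_2}$. Each is, by construction, a concatenation of the scaled blocks $b_i^{\mu}C_{\pi_{j}(\nu P+i)}$ ($1\le i\le P$) separated by the zero blocks prescribed by $\mathcal{P}(n)$; both codes share the same block layout, so the $i$-th nonzero block of either code starts at the same column $S_i=\sum_{l=1}^{i}n_l+(i-1)L$. Since the correlation is bilinear over concatenation and the all-zero blocks contribute nothing, I would expand $\mathcal{C}(B_a,B_c)(\tau)$ as a double sum over block pairs $(i,i')$, in which $(i,i')$ contributes $b_i^{\mu_1}(b_{i'}^{\mu_2})^{*}\,\mathcal{C}\!\left(C_{\pi_{j_1}(\nu_1P+i)},C_{\pi_{j_2}(\nu_2P+i')}\right)(\tau-(S_i-S_{i'}))$. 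Invoking the ideal correlation of the underlying $(M,L)$-CCC (zero unless the two matrix indices coincide and the shift is $0$, in which case it equals $ML$), this collapses to
\begin{equation*}
\mathcal{C}(B_a,B_c)(\tau)=ML\!\!\sum_{\substack{(i,i'):\,S_i-S_{i'}=\tau\\ \pi_{j_1}(\nu_1P+i)=\pi_{j_2}(\nu_2P+i')}}\!\! b_i^{\mu_1}(b_{i'}^{\mu_2})^{*}.
\end{equation*}
I will call a pair $(i,i')$ appearing here a \emph{matched pair at shift $\tau$}; each such pair contributes a term of modulus exactly $ML$.

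With this formula the problem becomes purely one of counting matched pairs. First I would dispose of the easy regimes: by orthogonality of the MOSs and injectivity of a permutation, two distinct codes of the \emph{same} set ($j_1=j_2$) have vanishing cross-correlation at every shift, and by the ZCCZ analysis of Theorem~\ref{Th:SNC_MCCC} all cross-correlations vanish for $|\tau|<\lambda+L$; moreover each code's own AACF is zero off the origin because it is an SNC-CCC, so $\delta_A=0$ and only inter-set, outside-the-ZCCZ shifts remain. For these every matched pair is off-diagonal ($i\ne i'$, since \eqref{Eq:def_pi} rules out $i=i'$ when $j_1\ne j_2$). The crux is then to show that \eqref{Eq:pi_unique} forces \emph{at most one} matched pair at each such shift: if $(i_1,i_1')$ and $(i_2,i_2')$ were two matched pairs realizing the same displacement, they would (as argued below) share a common offset shift $\alpha$, so that $(i_2,i_2')=(i_1+\alpha,i_1'+\alpha)$; applying \eqref{Eq:pi_unique} to the match $(i_1,i_1')$ then gives $\pi_{j_1}(\nu_1P+i_1+\alpha)\ne\pi_{j_2}(\nu_2P+i_1'+\alpha)$, i.e.\ $\pi_{j_1}(\nu_1P+i_2)\ne\pi_{j_2}(\nu_2P+i_2')$, contradicting that $(i_2,i_2')$ is matched.

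Granting this, the displayed sum has a single surviving term, so $|\mathcal{C}(B_a,B_c)(\tau)|=ML$ wherever it is nonzero; hence $\delta_C\le LM$, and with $\delta_A=0$ we obtain the aperiodic bound $\delta=\max\{\delta_A,\delta_C\}=LM$. For the periodic statement I would run the same argument on the two-term reduction $\Theta(B_a,B_c)(\tau)=\mathcal{C}(B_a,B_c)(\tau)+\mathcal{C}(B_a,B_c)(\tau-(PL+n))$, checking that the forward alignment (displacement $\tau$) and the wrap-around alignment (displacement $\tau-(PL+n)$) cannot simultaneously produce matched pairs, so that again a single term of modulus $ML$ survives and $\delta=LM$.

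The main obstacle I anticipate is the step linking a common time shift to a common offset shift $\alpha$: this is where one must use how the block starts $S_i$ encode the offset difference $i-i'$, so that two pairs sharing the displacement $\tau=S_i-S_{i'}$ necessarily share $i-i'$ (transparent when the interior gaps are equal, and precisely the reason Corollary~\ref{Cor:SNC_QCCS_n} instead assumes the $n_i$ distinct). Making this precise, together with the matching bookkeeping for the wrap-around term in the periodic case, is the part that needs the most care; the remaining estimates are immediate from the displayed formula.
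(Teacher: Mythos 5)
The paper states this corollary without proof, so there is no in-paper argument to compare against; judged on its own, your reduction is the natural (and essentially only) route, and its first half is sound: the block-pair expansion of $\mathcal{C}(B_a,B_c)(\tau)$, its collapse via the ideal correlation of the seed CCC to a sum over ``matched pairs'' each contributing modulus $ML$, the observation that $\delta_A=0$ and that intra-set cross-correlations vanish, and the elimination of two matched pairs sharing a common index offset $\alpha$ via \eqref{Eq:pi_unique} are all correct.

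However, the step you flag as ``the part that needs the most care'' is a genuine gap, not bookkeeping, and it cannot be closed from the stated hypotheses. Condition \eqref{Eq:pi_unique} only excludes two matched pairs that are translates of one another, i.e.\ with the same offset $d=i-i'$; it says nothing about matched pairs with different offsets $d_1>d_2$, and these collide at the same time shift whenever the corresponding window sums of the interior gaps differ by $(d_1-d_2)L$, which is possible as soon as some $n_l\ge L$. Concretely, take $M=P=4$, $\pi_{j_1}=(1,2,3,4)$, $\pi_{j_2}=(3,1,4,2)$: these satisfy \eqref{Eq:def_pi} and \eqref{Eq:pi_unique}, since the four matched pairs $(1,2),(2,4),(3,1),(4,3)$ have pairwise distinct offsets $-1,-2,2,1$. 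Now choose the partition $n_1=n_2=n_3=n_5=0$, $n_4=L$. Then $S_1=0$, $S_2=L$, $S_3=2L$, $S_4=4L$, so $S_3-S_1=S_4-S_3=2L$ and the matched pairs $(3,1)$ and $(4,3)$ both contribute at $\tau=2L$; with $\mathbf{b}^{\mu_1}=\mathbf{b}^{\mu_2}=(1,1,1,1)$ the cross-correlation there equals $2ML>LM$. Hence the statement you are proving is itself false for general partitions, and your argument only goes through under an additional hypothesis such as equal interior gaps $n_2=\cdots=n_P$, or more generally that no two window sums $\sum_{l=i'+1}^{i}n_l$ differ by a positive multiple of $L$ (with the analogous condition modulo $PL+n$ for the periodic claim, which you only sketch and which admits further wrap-around collisions). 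You should either add such a hypothesis explicitly or restrict to the class of partitions for which the non-collision property can be verified; as written, the ``remaining estimates are immediate'' claim conceals the one point where the corollary actually fails.
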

We are providing one more example to illustrate the multiple SNC-CCCs with ZCCZ.
\begin{example}
In  \textbf{Example} \ref{Ex:SNC}, we considered as identity permutation ($\pi_1=(1,2,3,4)$) and construct a SNC-CCC. As $P=2$, we have second permutation $\pi_2=(4,1,2,3)$ such that $\pi_1$ and $\pi_2$ satisfy \eqref{Eq:pi_unique}. With the same $\mathbf{C}$, $\mathbf{b}^1$ and $\mathbf{b}^2$, we have one CCC as given in \textbf{Example} \ref{Ex:SNC} and the other CCC is as given below
 \begin{equation}
    \begin{aligned}
        B_5&=   \begin{bmatrix}
            +-- \,0\,0\,0 +++\\
            +-+ \,0\,0\,0 ++-\\
            +++ \,0\,0\,0 -+-\\
            -++ \,0\,0\,0 -+-
        \end{bmatrix},
        &B_6=  \begin{bmatrix}
            +-- \,0\,0\,0 ---\\
            +-+ \,0\,0\,0 --+\\
            +++ \,0\,0\,0 +-+\\
            -++ \,0\,0\,0 +-+
        \end{bmatrix},\\
        B_7&=  \begin{bmatrix}
            +-+ \,0\,0\,0 +--\\
            ++- \,0\,0\,0 +++\\
            --+ \,0\,0\,0 -+-\\
            +++ \,0\,0\,0 +--
        \end{bmatrix},\text{ and }
        &B_8=  \begin{bmatrix}
            +-+ \,0\,0\,0 -++\\
            ++- \,0\,0\,0 ---\\
            --+ \,0\,0\,0 +-+\\
            +++ \,0\,0\,0 -++
        \end{bmatrix}.
    \end{aligned}
\end{equation}
Combining $B_1,B_2,\ldots,B_8$ is a multiple SNC-CCCs with aperiodic ZCCZ width is $6$.
\end{example}

Choosing a set of permutations plays a role in a low correlation magnitude value. To ensure such permutation exists, we are providing an example. Let $M=4$, $P=4$ so that $\pi_1=(1,2,3,4),\;\pi_2=(2,3,4,1),\;\pi_3=(3,4,1,2)$ and $\pi_4=(4,1,2,3)$, satisfy \eqref{Eq:def_pi} but not \eqref{Eq:pi_unique}. But $\pi_1=(1,2,3,4),\;\pi_2=(4,3,2,1),\;\pi_3=(3,1,4,2)$ and $\pi_4=(2,4,1,3)$ satisfies \eqref{Eq:pi_unique}.


\section{Comparison}\label{Sec:comparison}
Here, we compare our construction, highlighting its novelty and advantages compared to the existing literature.
\subsection{Comparison with \cite{Sahin_2021} and \cite{Shen_2022}}
 \cite{Sahin_2021} and \cite{Shen_2022} generated SNC-GCS via generalised Boolean functions and generated parameters closely multiple of $2$. Moreover, their findings are restricted to SNC-GCSs and do not extend to SNC-CCCs. However, in the proposed construction, every code is a SNC-GCS, which is not restricted to a multiple of $2$ only.
 \subsection{Comparision with \cite{Zhou_2020}}
 The method in \cite{Zhou_2020} implements an iterative strategy using an existing CCC to generate SNC-MOGCSs. Whereas, the proposed construction provides SNC-CCCs.
 \subsection{Comparision with \cite{Shen_CCC}}
SNC-CCCs were constructed using an extended Boolean function in  \cite{Shen_CCC} with diversified parameters. Notably, these parameters consistently involved powers of $p$ for $p\ge 2$, where the elements are obtained from the $q$th root of unity and zero for $(p \mid q)$. The given example \ref{Ex:SNC} demonstrates that the proposed construction offers more flexibility.

\subsection{Comparison with \cite{Rajen_SZCCS}}
When $n=0$, the proposed construction reduces to the traditional CCC and multiple traditional CCC, which is given in \cite{Rajen_SZCCS}. Therefore, CCC and multiple CCCs of \cite{Rajen_SZCCS} are a special case of the proposed construction.
 \section{Future directions}\label{Sec:fut_work}
 Based on our contribution in this paper, we would like to introduce the following future works: 
 \begin{enumerate}
 \item In \textbf{Corollary} \ref{Cor:SNC_QCCS_pi}, the set of permutations used satisfying \eqref{Eq:def_pi} and \eqref{Eq:pi_unique}. Although we provide an algorithm to generate it and some examples, defining a function-based construction may still be interesting. We shall consider it to be our future research problem.
 
     \item In the current literature, we do not have sufficient information on the optimal collection of multiple SNC-CCCs in relation to their maximum magnitude of inter-set cross-correlation value. This limitation leads to a future direction on deriving a lower correlation bound for multiple collections of SNC-CCCs.
     \item Besides, as we can see in our proposed construction of multiple SNC-CCCs, we also have a ZCCZ, which leads us to the natural question, ``What will be the relationship between the ZCCZ width and the other parameters of our multiple collection of SNC-CCCs?''.
 \end{enumerate}

\section{Conclusion}\label{Sec:Con}
In this paper, with the help of MOSs, we developed a method to construct SNC-CCCs, with flexible parameters. The proposed construction can cover almost all the possible lengths over the alphabet $\{-1,0,1\}$. Further, we have extended the construction to produce multiple SNC-CCCs with inter-set ZCCZ. The proposed construction includes a wider range of parameters in relation to length and alphabet size. Furthermore, we have shown that restriction can be made on the highest cross-correlation magnitude value outside the ZCCZ width, assuring that the multiple SNC-CCCs possess not only inter-set ZCCZ width but also exhibit a low cross-correlation magnitude value outside the ZCCZ width.


\bibliography{sn-bibliography}

\end{document}